\def\ad{\mathrm{ad\,}}
\def\ad{\mathrm{ad\,}}
\newtheorem{lemma}{Lemma}
\def\openone{\leavevmode\hbox{\small1\kern-3.3pt\normalsize1}}
\def\bbbz{\Bbb{Z}}
\def\ad{\mbox{ad\,}}
\def\bbbc{{\Bbb C}}
\def\bbbr{{\Bbb R}}
\def\bbbz{{\Bbb Z}}
\def\openone{\leavevmode\hbox{\small1\kern-3.3pt\normalsize1}}
\def\newpic#1{%
   \def\emline##1##2##3##4##5##6{%
      \put(##1,##2){\special{em:point #1##3}}%
      \put(##4,##5){\special{em:point #1##6}}%
      \special{em:line #1##3,#1##6}}}
\def\bbbr{{\Bbb R}}
\def\bbbc{{\Bbb C}}
\def\bbbz{{\Bbb Z}}
\def\ad{\mbox{ad}\,}
\def\openone{\leavevmode\hbox{\small1\kern-3.3pt\normalsize1}}
\begin{document}

\begin{frontmatter}

\title{Solutions of multi-component NLS models and
spinor Bose-Einstein condensates }

\author{V. S. Gerdjikov, N. A. Kostov, T. I. Valchev}
\address{Institute for Nuclear Research and Nuclear
Energy,  Bulgarian Academy of Sciences, 72 Tsarigradsko chaussee
1784 Sofia, Bulgaria}

\maketitle

\begin{abstract}
A three-  and five-component nonlinear Schrodinger-type models,
which describe spinor  Bose-Einstein condensates (BEC's) with
hyperfine structures $F=1$ and $F=2$ respectively, are studied.
These models for particular values of the coupling constants are
integrable by the inverse scattering method. They are related to
symmetric spaces of ${\bf BD.I}$-type $\simeq {\rm SO(2r+1)}/{\rm
SO(2)\times SO(2r-1)}$ for $r=2$ and $r=3$. Using conveniently
modified Zakha\-rov-Shabat dressing procedure we obtain different
types of soliton solutions.
\end{abstract}

\begin{keyword}
Bose-Einstein condensates, integrable systems, soliton models
\end{keyword}

\end{frontmatter}

\section{Introduction}

The dynamics of spinor BECs is described by a three-component
Gross-Pitaevskii (GP) system of equations. In the one-dimensional
approximation  the GP system goes into the following multicomponent
nonlinear Schr\"{o}dinger (MNLS) equation in 1D $x$-space \cite{IMW04}:
\begin{eqnarray}\label{eq:1}
&& i\partial_{t} \Phi_{1}+\partial^{2}_{x} \Phi_{1}+2(|\Phi_{1}|^2
+2|\Phi_{0}|^2) \Phi_{1} +2\Phi_{-1}^{*}\Phi_{0}^2=0, \nonumber \\
&& i\partial_{t} \Phi_{0}+\partial^{2}_{x}
\Phi_{0}+2(|\Phi_{-1}|^2
+|\Phi_{0}|^2+|\Phi_{1}|^2) \Phi_{0} +2\Phi_{0}^{*}\Phi_{1}\Phi_{-1}=0,\\
&& i\partial_{t}\Phi_{-1}+\partial^{2}_{x}
\Phi_{-1}+2(|\Phi_{-1}|^2+ 2|\Phi_{0}|^2) \Phi_{-1}
+2\Phi_{1}^{*}\Phi_{0}^2=0. \nonumber
\end{eqnarray}
We consider BECs of alkali atoms in the $F=1$ hyperfine state,
elongated in $x$ direction and confined in the transverse
directions $y,z$ by purely optical means. Thus the assembly of
atoms in the $F=1$ hyperfine state can be described by a
normalized spinor wave vector $ {\bf\Phi}(x,t)=(\Phi_1(x,t),
\Phi_0(x,t), \Phi_{-1}(x,t))^{T}$ whose components are labeled by
the values of $m_F =1,0,-1$. The above model is integrable by
means of the inverse scattering transform method \cite{IMW04}. It also
allows an exact description of the dynamics and interaction of
bright solitons with spin degrees of freedom. Matter-wave solitons
are expected to be useful in atom laser, atom interferometry and
coherent atom transport. It could contribute to the realization of
quantum information processing or computation, as a part of new
field of atom optics.

Lax pairs and geometrical interpretation of the MNLS models
related to symmetric spaces (including  the model (\ref{eq:1}))
are given in \cite{ForKu*83}. Darboux transformation for this
special integrable model is developed in \cite{LLMML05}. In
\cite{uiw07} the authors study soliton solutions for the
multicomponent Gross-Pitaevskii equation for $F=2$ spinor
condensate by two different methods assuming single-mode
amplitudes and by generalizing Hirota's direct method for
multicomponent systems. They point out the importance of
integrable cases, which take place for particular choices of the
coupling constants.

The aim of present paper is to show that both systems mentioned
above are integrable by the inverse scattering method and are
related to  symmetric spaces \cite{Helg} ${\bf BD.I}$-type:
$\simeq {\rm SO(2r+1)}/{\rm SO(2)\times SO(2r-1)}$ with $r=2$ and
$r=3$ respectively. In Section 2 we formulate the Lax
representations for the models. Section 3 is devoted to the $F=2$
BEC model. In Section 4 we construct the fundamental analytic
solutions of the corresponding Lax operator $L$ and reduce the
inverse scattering problem (ISP) for $L$ to a Riemann-Hilbert problem (RHP).
Using the special properties of the ${\bf BD.I}$ symmetric spaces we also
obtain the minimal sets of scattering data $\mathfrak{T}_i$ each
of which allow one to reconstruct both the scattering matrix
$T(\lambda)$ and the corresponding potential $Q(x,t)$. This allows
us to derive in Section 5 their soliton solutions using suitable
modification of the Zakharov-Shabat dressing method, proposed in
\cite{ZMNP,Za*Mi}.

\section{Multicomponent nonlinear Schr\"{o}dinger equations for
{\bf BD.I.} series of symmetric spaces}

MNLS equations  for the {\bf BD.I.} series of symmetric spaces
(algebras of the type $so(2r+1)$ and $J$ dual to $e_1$) have the
Lax representation $[L,M]=0$ as follows
\begin{eqnarray}\label{eq:3.1}
L\psi (x,t,\lambda ) &\equiv & i\partial_x\psi + (Q(x,t) -
\lambda J)\psi  (x,t,\lambda )=0.\\ \label{eq:3.2}
M\psi (x,t,\lambda ) &\equiv & i\partial_t\psi + (V_0(x,t) +
\lambda V_1(x,t) - \lambda ^2 J)\psi  (x,t,\lambda )=0, \\
V_1(x,t)&=& Q(x,t), \qquad V_0(x,t) = i \ad_J^{-1} \frac{d Q}{dx}
+ \frac{1}{2} \left[\ad_J^{-1} Q, Q(x,t) \right].
\end{eqnarray}
where
\begin{equation}\label{vec1}
Q=\left(\begin{array}{ccc}  0 & \vec{q}^{T} & 0 \\
  \vec{p} & 0 & s_{0}\vec{q} \\  0 & \vec{p}^{T}s_{0} & 0 \\
\end{array}\right),\qquad J=\mbox{diag}(1,0,\ldots 0, -1).
\end{equation}
The $2r-1$-vectors $\vec{q}$ and $\vec{p}$ have the form
\[ \vec{q} = (q_2,\dots , q_r, q_{r+1}, q_{r+2},\dots , q_{2r})^T, \qquad
\vec{p} = (p_2,\dots , p_r, p_{r+1}, p_{r+2},\dots , p_{2r})^T,
 \]
while the matrix $s_0$ represents the metric involved in the definition
of $so(2r-1)$, therefore it is related to the metric $S_0$ associated
with $so(2r+1)$ in the following manner
\begin{eqnarray}
S_{0}= \sum_{k=1}^{2r+1}(-1)^{k+1}E_{k,2r+2-k} =
\left(\begin{array}{ccc}
0 & 0  & 1 \\  0 & -s_{0} & 0 \\
1 & 0 & 0 \\ \end{array} \right),
\qquad (E_{kn})_{ij}=\delta_{ik}\delta_{nj}
\end{eqnarray}
Next we will use
\begin{equation}\label{eq:Epm}
\vec{E}_1^\pm = ( E_{\pm (e_1-e_2)}, \dots ,
E_{\pm (e_1-e_r)}, E_{\pm e_1}, E_{\pm(e_1+e_r)}, \dots , E_{\pm (e_1+e_2)} ),
\end{equation}
We will use also the "scalar product"
\[ (\vec{q}\cdot \vec{E}_1^+) = \sum_{k=2}^{r}( q_k(x,t)E_{e_1-e_k} +
q_{2r-k+2}(x,t)E_{e_1+e_k}) +q_{r+1}(x,t)E_{e_1}  .\]
Then the generic form of the potentials $Q(x,t)$ related to these type
of symmetric  spaces is
\begin{equation}\label{eq:Q}
    Q(x,t) = (\vec{q}(x,t) \cdot \vec{E}_1^+) + (\vec{p}(x,t) \cdot \vec{E}_1^-) ,
\end{equation}
where $E_\alpha $ are the Weyl
generators of the corresponding Lie algebra (see \cite{Helg} for details)
and $\Delta_1^+$ is the set of all positive roots of $so(2r+1)$
such that $(\alpha,e_1)=1$. In fact $\Delta_1^+ =\{ e_1, \quad e_1\pm e_k,\quad k=2,\dots,r\}$.

In terms of these notations the generic MNLS type equations connected
to ${\bf BD.I.}$ acquire the form
\begin{equation}\label{eq:4.2}
\begin{split}
i \vec{q}_t &+ \vec{q}_{xx} + 2 (\vec{q},\vec{p}) \vec{q} -
 (\vec{q},s_0\vec{q}) s_0\vec{p} =0, \\
i \vec{p}_t &- \vec{p}_{xx} - 2 (\vec{q},\vec{p}) \vec{p} -
(\vec{p},s_0\vec{p}) s_0\vec{q} =0,
\end{split}
\end{equation}
In the case of $r=2$ if we impose the reduction $p_k=q_k^{*}$ and introduce
the new variables $\Phi_1=q_{2}$, $\Phi_{0}=q_3/\sqrt{2}$, $\Phi_{-1}=q_4$ then
we reproduce the equations (\ref{eq:1}).

\section{F=2 spinor Bose-Einstein condensate, integrable case}

Let us introduce Hamiltonian for MNLS equations (\ref{eq:4.2})
with $\vec{p}=\epsilon \vec{q}^{*}$,
$\epsilon=\pm 1$
\begin{eqnarray}\label{eq:Ham1}
H_{{\rm MNLS}}=\int_{-\infty}^\infty d x
\left((\partial_{x}\vec{q},\partial_{x}\vec{q^{*}})-\epsilon
(\vec{q},\vec{q^{*}})^2+\epsilon
(\vec{q},s_0\vec{q})(\vec{q^{*}},s_{0}\vec{q^{*}})\right),
\end{eqnarray}
Define the number density and the singlet-pair amplitude by
\cite{CYHo,UK,uiw07}
\begin{eqnarray}
n=(\vec{{\bf \Phi}},\vec{{\bf \Phi^{*}}})=\sum_{\alpha=-2\ldots
2}\Phi_{\alpha}\Phi_{\alpha}^{*},\qquad \Theta=(\vec{{\bf
\Phi}},s_0\vec{{\bf \Phi}}).
\end{eqnarray}
where $\Phi_{2}=q_{2}$, $\Phi_{1}=q_{3}$, $\Phi_{0}=q_{4}$,
$\Phi_{-1}=q_{5}$,$\Phi_{-2}=q_{6}$. Then the singlet-pair
amplitude take the form \cite{CYHo,UK,uiw07}
\begin{eqnarray}
\Theta=2\Phi_{2}\Phi_{-2}-2 \Phi_{1}\Phi_{-1}+\Phi_{0}^2.
\end{eqnarray}
The physical meaning of $\Theta$ is a measure of formation of
spin-singlet "pairs" of bosons. The assembly of atoms in the $F=2$
hyperfine state can be described by a normalized spinor wave
vector
\begin{eqnarray}
{\bf\Phi}(x,t)=(\Phi_2(x,t), \Phi_1(x,t), \Phi_0(x,t),
\Phi_{-1}(x,t), \Phi_{-2}(x,t))^{T},
\end{eqnarray}
whose components are labeled by the values of $m_F =2,1,0,-1,-2$.
Here the energy functional within mean-field theory \cite{OM,Ho,CYHo,UK,uiw07}
is defined by
\begin{eqnarray}
\label{eq:GPe} E_{\mathrm{GP}}[{\bf \Phi}]= \int_{-\infty}^\infty
d x \left(\frac{\hbar^2}{2m}|\partial_x{\bf
\Phi}|^2+\frac{c_0}{2}n^2 +\frac{c_2}{2}{\bf
f}^2+\frac{c_4}{2}|\Theta|^2\right).
\end{eqnarray}
The coupling constants $c_i$ are real and can be expressed in
terms of a transverse confinement radius and a linear combination
of the $s$-wave scattering lengths of atoms
\cite{IMW04,imww04,uiw06} and ${\bf f}$ describe spin densities
\cite{uiw07}. Choosing $ c_2=0$, $c_4=1$ and $c_0=-2$ we obtain
integrable by the inverse scattering method model with the Hamiltonian.
We set for simplicity $\hbar=1, 2 m=1$ without any loss of generality. The
evolution equation is described by the multi-component
Gross-Pitaevskii equation in one dimension \cite{uiw07}
\begin{eqnarray}
\label{eq:GP} i\frac{\partial{\bf \Phi}}{\partial t}=\frac{\delta
E_{\mathrm{GP}}[{\bf \Phi}]}{\delta {\bf \Phi^*}}.
\end{eqnarray}
Then we have
\begin{eqnarray} \label{Eq}
i \vec{{\bf \Phi}}_t &+ \vec{{\bf \Phi}}_{xx} =-2\epsilon
(\vec{{\bf \Phi}},\vec{{\bf \Phi^{*}}}) \vec{{\bf \Phi}} +
\epsilon (\vec{{\bf \Phi}},s_0\vec{{\bf \Phi}}) s_0\vec{{\bf
\Phi^{*}}} ,
\end{eqnarray}
or in explicit form by components we have
\begin{eqnarray}
&&i\partial_t\Phi_{\pm 2}+\partial_{xx}\Phi_{\pm 2}= -2\epsilon
(\vec{{\bf \Phi}},\vec{{\bf \Phi^{*}}}) \Phi_{\pm 2} +\epsilon
(2\Phi_{2}\Phi_{-2}-2 \Phi_{1}\Phi_{-1}+\Phi_{0}^2) \Phi_{\mp
2}^*,\nonumber
\\
&&i\partial_t\Phi_{\pm 1}+\partial_{xx}\Phi_{\pm 1}= -2\epsilon
(\vec{{\bf \Phi}},\vec{{\bf \Phi^{*}}}) \Phi_{\pm 1} -\epsilon
(2\Phi_{2}\Phi_{-2}-2 \Phi_{1}\Phi_{-1}+\Phi_{0}^2) \Phi_{\mp
1}^*,\nonumber
\\
&&i\partial_t\Phi_{0}+\partial_{xx}\Phi_{0}=  -2\epsilon
(\vec{{\bf \Phi}},\vec{{\bf \Phi^{*}}}) \Phi_{\pm 0} +\epsilon
(2\Phi_{2}\Phi_{-2}-2 \Phi_{1}\Phi_{-1}+\Phi_{0}^2)
\Phi_{0}^*.\nonumber
\end{eqnarray}

\section{Inverse scattering method and reconstruction
of potential from minimal scattering data}\label{sec:3}

Herein we remind some basic features of the inverse scattering
theory appropriate for the special case of $F=2$ spinor BEC equations.

Solving the direct and the inverse scattering problem (ISP) for
$L$ uses the Jost solutions  which are defined by, see \cite{VSG2}
and the references therein
\begin{equation}
 \lim_{x \to -\infty} \phi(x,t,\lambda) e^{  i \lambda J x }=\openone, \qquad  \lim_{x \to \infty}\psi(x,t,\lambda) e^{  i \lambda J x } = \openone
 \end{equation}
 and the scattering matrix $T(\lambda,t)\equiv \psi^{-1}\phi(x,t,\lambda)$. Due to the special choice of $J$ and to the fact that the Jost solutions and the scattering
 matrix take values in the group $SO(2r+1)$ we can use the following block-matrix structure of $T(\lambda,t)$
\begin{equation}\label{eq:25.1}
T(\lambda,t) = \left( \begin{array}{ccc} m_1^+ & -\vec{b}^-{}^T & c_1^- \\
\vec{b}^+ & {\bf T}_{22} & -s_0\vec{b}^- \\ c_1^+ & \vec{b}^+{}^Ts_0 & m_1^- \\
\end{array}\right),
\end{equation}
where $\vec{b}^\pm (\lambda,t)$ are $2r-1$-component vectors,
${\bf T}_{22}(\lambda)$ is a $2r-1 \times 2r-1$ block and
$m_1^\pm (\lambda)$,  $c_1^\pm (\lambda)$ are scalar functions satisfying
$c_1^\pm = 1/2 (\vec{b}^\pm \cdot s_0 \vec{b}^\pm) /m_1^\pm$.

Important tools for reducing the ISP to a Riemann-Hilbert problem (RHP)
are the fundamental analytic solution (FAS) $\chi^{\pm}
(x,t,\lambda )$. Their construction is based on the generalized Gauss
decomposition of $T(\lambda,t)$
\begin{equation}\label{eq:FAS_J}
\chi ^\pm(x,t,\lambda)= \phi (x,t,\lambda) S_{J}^{\pm}(t,\lambda ) =
\psi (x,t,\lambda ) T_{J}^{\mp}(t,\lambda ) D_J^\pm (\lambda).
\end{equation}
Here $S_{J}^{\pm} $, $T_{J}^{\pm} $ upper- and lower-
block-triangular matrices, while $D_J^\pm(\lambda)$ are block-diagonal matrices with the same block structure
 as $T(\lambda,t)$ above. Skipping the details we give the explicit expressions of the Gauss factors in terms of the matrix elements of $T(\lambda,t)$
\begin{eqnarray}\label{eq:S_Jpm}
&&S_J^\pm (t,\lambda )= \exp \left( \pm (\vec{\tau}^\pm (\lambda,t) \cdot
\vec{E}_1^\pm ) \right),
\quad  T_J^\pm (t,\lambda )= \exp \left( \mp (\vec{\rho}^\pm (\lambda,t) \cdot
\vec{E}_1^\pm ) \right), \nonumber \\
&& D_J^+ = \left( \begin{array}{ccc} m_1^+ & 0 & 0 \\ 0 & {\bf m}_2^+ & 0 \\
0 & 0 & 1/m_1^+ \end{array} \right), \qquad  D_J^- =
\left( \begin{array}{ccc} 1/m_1^- & 0 & 0 \\ 0 & {\bf m}_2^- & 0 \\
0 & 0 & m_1^- \end{array} \right),
\end{eqnarray}
where $\vec{\tau}^\pm (\lambda ,t) = \vec{b}^\mp /m_1^\pm  $,
 $\vec{\rho}^\pm (\lambda ,t) = \vec{b}^\pm /m_1^\pm  $ and
\[  {\bf m}_2^+ = {\bf T}_{22} + \frac{\vec{b}^+ \vec{b}^-{}^T  }{m_1^+},
\qquad {\bf m}_2^- = {\bf T}_{22} +
 \frac{s_0\vec{b}^- \vec{b}^+{}^T s_0 }{m_1^-}.\]

If $Q(x,t) $ evolves according to (\ref{eq:1}) then the scattering
matrix and its elements satisfy the following linear evolution
equations
\begin{equation}\label{eq:evol}
i\frac{d\vec{b}^{\pm}}{d t} \pm \lambda ^2 \vec{b}^{\pm}(t,\lambda ) =0,
 \qquad  i\frac{d m_1^{\pm}}{d t}  =0,
 \qquad  i \frac{d{\bf m}_2^{\pm}}{d t}  =0,
\end{equation}
so the block-diagonal matrices $D^{\pm}(\lambda)$ can be considered as
generating functionals of the integrals of motion. The fact that
all $(2r-1)^2$ matrix elements of $m_2^\pm(\lambda)$ for
$\lambda \in \bbbc_\pm$  generate integrals of motion
reflect the superintegrability of the model and are due to the degeneracy
 of the dispersion law of
(\ref{eq:1}).
We remind that $D^\pm_J(\lambda)$ allow analytic extension for
$\lambda\in \bbbc_\pm$ and that their zeroes and
 poles determine the discrete eigenvalues of $L$.

The FAS for real $\lambda$ are linearly related
\begin{equation}\label{eq:rhp0}
\chi^+(x,t,\lambda) =\chi^-(x,t,\lambda) G_J(\lambda,t),
\qquad G_{0,J}(\lambda,t) =S^-_J(\lambda,t)S^+_J(\lambda,t) .
\end{equation}
One can rewrite eq. (\ref{eq:rhp0}) in an equivalent form for the FAS
$\xi^\pm(x,t,\lambda)=\chi^\pm (x,t,\lambda)e^{i\lambda Jx }$ which satisfy
also the relation
\begin{equation}\label{eq:rh-n}
\lim_{\lambda \to \infty} \xi^\pm(x,t,\lambda) = \openone.
\end{equation}
Then these FAS satisfy
\begin{equation}\label{eq:rhp1}
\xi^+(x,t,\lambda) =\xi^-(x,t,\lambda) G_J(x,\lambda,t), \qquad G_{J}(x,\lambda,t) =e^{-i\lambda Jx}G^-_{0,J}(\lambda,t)e^{i\lambda Jx} .
\end{equation}
Obviously the sewing function $G_j(x,\lambda,t)$ is uniquely determined by the
Gauss factors $S_J^\pm (\lambda,t)$.
In view of eq. (\ref{eq:S_Jpm}) we arrive to the following
\begin{lemma}\label{lem:ms}
Let the potential $Q(x,t)$ is such that the Lax operator $L$ has no discrete eigenvalues.
Then as minimal set of scattering data which determines uniquely the scattering matrix
$T(\lambda,t)$ and
the corresponding potential $Q(x,t)$ one can consider either one of the sets $\mathfrak{T}_i$, $i=1,2$
\begin{equation}\label{eq:T_i}
\mathfrak{T}_1 \equiv \{ \vec{\rho}^+(\lambda,t), \vec{\rho}^-(\lambda,t),
 \quad \lambda \in \bbbr\},
\qquad \mathfrak{T}_2 \equiv \{ \vec{\tau}^+(\lambda,t), \vec{\tau}^-(\lambda,t),
 \quad \lambda \in \bbbr\}.
\end{equation}
\end{lemma}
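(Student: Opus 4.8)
The plan is to show that each of the sets $\mathfrak{T}_1$ and $\mathfrak{T}_2$ determines first the whole scattering matrix $T(\lambda,t)$ and then the potential $Q(x,t)$. The two cases are interchanged by passing from the $\phi$-based Gauss factorisation in \eqref{eq:FAS_J} to the $\psi$-based one (equivalently $S_J^\pm\leftrightarrow T_J^\pm$), so I would treat $\mathfrak{T}_1$ in detail and only transcribe the argument for $\mathfrak{T}_2$.

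First I would observe that $\mathfrak{T}_1=\{\vec{\rho}^+,\vec{\rho}^-\}$ fixes the triangular Gauss factors at once: by \eqref{eq:S_Jpm} one has $T_J^\pm=\exp(\mp(\vec{\rho}^\pm\cdot\vec{E}_1^\pm))$, so the only quantities still missing in the decomposition $T=T_J^\mp D_J^\pm (S_J^\pm)^{-1}$ are the block-diagonal factors $D_J^\pm=\diag(m_1^\pm,{\bf m}_2^\pm,\dots)$.

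The heart of the argument is the reconstruction of $D_J^\pm$ by dispersion relations. The entries $m_1^+,{\bf m}_2^+$ are analytic for $\lambda\in\bbbc_+$ and $m_1^-,{\bf m}_2^-$ for $\lambda\in\bbbc_-$, all tending to the identity as $\lambda\to\infty$; since $L$ has no discrete eigenvalues, $D_J^\pm$ have neither zeros nor poles in the open half-planes, so $\ln m_1^\pm$ and $\ln\det{\bf m}_2^\pm$ are analytic there. On the real axis the orthogonality $T^TS_0T=S_0$ of $SO(2r+1)$, combined with the scalar constraint $c_1^\pm=\tfrac12(\vec{b}^\pm\cdot s_0\vec{b}^\pm)/m_1^\pm$ recorded after \eqref{eq:25.1}, expresses the product $m_1^+m_1^-$, and the matching combination of ${\bf m}_2^\pm$, purely through $\vec{\rho}^\pm=\vec{b}^\pm/m_1^\pm$. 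This turns the recovery of $\ln m_1^\pm$ into a scalar additive Riemann--Hilbert problem of the form $\ln m_1^++\ln m_1^-=g(\lambda)$ on $\bbbr$, solved by the Cauchy projectors, and the recovery of ${\bf m}_2^\pm$ into the analogous matrix factorisation. With $m_1^\pm$ in hand I would read off $\vec{\tau}^\pm=(m_1^\mp/m_1^\pm)\vec{\rho}^\mp$, hence $S_J^\pm$, and assemble $T(\lambda,t)$ from its Gauss factors; uniqueness of the generalised Gauss decomposition gives uniqueness of $T$.

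Finally I would recover $Q(x,t)$ from the Riemann--Hilbert problem \eqref{eq:rhp1}: its jump $G_J(x,\lambda,t)$ is built only from $S_J^\pm$, hence from the data, and the canonical normalisation \eqref{eq:rh-n} together with the absence of discrete eigenvalues makes the problem regular with a unique solution $\xi^\pm$. Inserting $\xi^\pm=\openone+\lambda^{-1}\xi_1(x,t)+O(\lambda^{-2})$ into $L\chi^\pm=0$ yields, at order $\lambda^0$, the reconstruction $Q(x,t)=[J,\xi_1(x,t)]$, and uniqueness of the RHP solution gives uniqueness of $Q$. I expect the main obstacle to be the third step above: one has to check that the $SO(2r+1)$ orthogonality and the scalar constraints really pin down $D_J^\pm$ with no loss of information, i.e. that the induced (matrix) Riemann--Hilbert problem for the diagonal factors is uniquely solvable --- and this is exactly where the no-discrete-eigenvalue hypothesis is used, since it forbids nontrivial homogeneous solutions. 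Minimality is then automatic: $\vec{\rho}^+$ and $\vec{\rho}^-$ are independent functions on $\bbbr$, while $D_J^\pm$ are not free data but functionals of them. The time variable is inert, for by \eqref{eq:evol} the factors $m_1^\pm,{\bf m}_2^\pm$ are conserved and $\vec{\rho}^\pm$ carry only the explicit phases $e^{\pm i\lambda^2 t}$.
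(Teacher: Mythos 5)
Your proposal is correct and takes essentially the same route as the paper's own proof: you recover $m_1^\pm$ from the $SO(2r+1)$ orthogonality constraint (\ref{eq:25.3}) via Cauchy--Plemelj dispersion relations, recover ${\bf m}_2^\pm$ by the analogous matrix Riemann--Hilbert factorisation, and reconstruct $Q(x,t)$ from the uniquely solvable, canonically normalised RHP (\ref{eq:rhp1}). Your asymptotic formula $Q=[J,\xi_1]$ is equivalent to the paper's reconstruction formula (\ref{eq:QQ}), and your added remarks on exactly where the no-discrete-eigenvalue hypothesis enters simply make explicit what the paper leaves implicit.
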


\begin{proof} i) From the fact that $T(\lambda,t)\in SO(2r+1)$ one can derive that
\begin{equation}\label{eq:25.3}
\frac{1}{m_1^+m_1^-} = 1 + (\vec{\rho^+},\vec{\rho^-}) + \frac{1}{4}
(\vec{\rho^+},s_0\vec{\rho^+}) (\vec{\rho^-}, s_0\vec{\rho^-})
\end{equation}
for $\lambda\in\bbbr$. Using the analyticity properties of $m_1^\pm $
 we can recover them from eq. (\ref{eq:25.3})
using Cauchy-Plemelji formulae. Given $\mathfrak{T}_i$ and $m_1^\pm$
one easily recovers $\vec{b}^\pm(\lambda)$
and $c_1^\pm(\lambda)$. In order to recover ${\bf m}_2^\pm$ one again
uses their analyticity properties,
 only now the problem reduces to a RHP for functions on $SO(2r+1)$.
 The details will be presented elsewhere.

ii) Obviously, given $\mathfrak{T}_i$ one uniquely recovers the sewing
function $G_J(x,t,\lambda)$.
In order to recover the corresponding potential $Q(x,t)$ one can use
the fact that the RHP (\ref{eq:rhp1})
with canonical normalization has unique solution. Given that solution
$\chi^\pm(x,t,\lambda)$ one recovers $Q(x,t)$ via the formula
\begin{equation}\label{eq:QQ}
Q(x,t) = \lim_{\lambda\to\infty} \lambda
\left( J - \chi^\pm J \widehat{\chi}^\pm(x,t,\lambda)\right).
\end{equation}
which is well known.
\end{proof}
We impose also the standard reduction, namely assume that  $Q(x,t)=Q^\dag(x,t)$,
 or in components $p_k=q_k^*$.
As a consequence we have $\vec{\rho}^-(\lambda,t)=\vec{\rho}^{+,*}(\lambda,t)$
 and $\vec{\tau}^-(\lambda,t)=\vec{\tau}^{+,*}(\lambda,t)$.

\section{Dressing method and soliton solutions}

The main goal of the dressing method \cite{1,I04,GGK05a,GGK05b,VSG1}
is, starting from a known solutions $\chi^\pm_0(x,t,\lambda)$ of
$L_0(\lambda) $ with potential $Q_{(0)}(x,t)$ to construct new singular
solutions $\chi^\pm_1(x,t,\lambda )$ of $L$ with a potential
$Q_{(1)}(x,t)$ with two additinal singularities located at prescribed
positions $\lambda _1^\pm $; the reduction $\vec{p} =\vec{q}^{*}$
ensures that $\lambda_1^-=(\lambda_1^+)^*$. It is related to the regular
one by a dressing factor $u(x,t,\lambda )$
\begin{equation}\label{eq:Dressfactor}
\chi^{\pm}_1(x,t,\lambda)=u(x,\lambda) \chi^{\pm}_0(x,t,\lambda)
u_{-}^{-1}(\lambda ). \qquad u_-(\lambda )=\lim_{x\to -\infty }
u(x,\lambda )
\end{equation}
Note that $u_-(\lambda )$ is a block-diagonal matrix. The dressing
factor $u(x,\lambda ) $ must satisfy the equation
\begin{equation}\label{eq:u-eq}
i\partial_x u + Q_{(1)}(x) u - u Q_{(0)}(x)- \lambda
[J,u(x,\lambda)] =0,
\end{equation}
and the normalization condition $\lim_{\lambda \to\infty }
u(x,\lambda ) =\openone $. Besides $\chi ^\pm_i (x,\lambda ) $,
$i=0,1 $ and $u(x,\lambda ) $ must belong to the corresponding Lie
group $SO(2r+1,\bbbc) $; in addition $u(x,\lambda ) $ by
construction has poles and zeroes at $\lambda_1^\pm$.

The construction of $u(x,\lambda ) $ is based on an appropriate
anzats specifying explicitly the form of its $\lambda
$-dependence \cite{Za*Mi,GGK05b} and the references therein.
\begin{equation}\label{eq:rank1}
u(x,\lambda)=\openone+(c(\lambda)-1)P (x,t) +\left(\frac{1}{c(\lambda)}-1\right)
 \overline{P}(x,t), \qquad
\overline{P} = S_0^{-1}P^TS_0,
\end{equation}
where  $P(x,t)$ and $\overline{P}(x,t)$ are projectors  whose rank $s$
can not exceed $r$ and which satisfy $P\overline{P}(x,t)=0$. Given a set of
$s$ linearly independent polarization vectors $|n_k\rangle $ spanning the
corresponding eigensubspase of $L$ one can define
\begin{equation}\label{eq:31}
\begin{split}
& P(x,t)=\sum_{a,b=1}^s |n_a(x,t)\rangle M_{ab}^{-1}\langle n_b^\dag(x,t)|,
\quad M_{ab}(x,t)= \langle n_b^\dag(x,t) | n_a(x,t)\rangle, \\
& |n_a(x,t)\rangle=\chi^+_0(x,t,\lambda^+)|n_{0,a}\rangle,\quad c(\lambda)=
\frac{\lambda-\lambda^+}{\lambda-\lambda^-}, \quad
\langle n_{0,a}|S_0|n_{0,b}\rangle =0.
\end{split}
\end{equation}
Taking the limit $\lambda \to \infty$ in eq. (\ref{eq:u-eq}) we get that
\[ Q_{(1)}(x,t) - Q_{(0)}(x,t) = (\lambda_1^- - \lambda_1^+) [ J,
P(x,t)- \overline{P}(x,t)] .\] Below we list the explicit expressions only
 for the one-soliton solutions. To this end we assume $Q_{(0)}=0$ and
 put $\lambda_1^\pm = \mu \pm i\nu$. As a result we get
\begin{equation}\label{eq:q-1s}
q_k^{(\rm 1s)}(x,t) = - 2i\nu \left( P_{1k}(x,t) + (-1)^k P_{\bar{k},2r+1} (x,t) \right),
\end{equation}
where $\bar{k}=2r+2-k$.

Repeating the above procedure $N$ times we can obtain $N$ soliton solutions.

\subsection{The case of rank one solitons} 

In this case $s=1$  so that  the generic (arbitrary $r$)
one-soliton solution reads
\begin{eqnarray}
q_{k}&=&\frac{-i\nu e^{-i\mu(x-vt-\delta_0)}}{\cosh 2z + \Delta_0^2 }
\left(\alpha_k e^{z-i\phi_k} +(-1)^k \alpha_{\bar{k}}
 e^{-z+i\phi_{\bar{k}}} \right), \nonumber\\
v & = & \frac{\nu^2-\mu^2}{\mu},\qquad u=-2\mu, \qquad z(x,t) =
 \nu (x - ut -\xi_0), \\
\xi_0 &=& \frac{1}{2\nu} \ln \frac{|n_{0,2r+1}|}{|n_{0,1}|},
\qquad \alpha_k =\frac{|n_{0,k}|}{\sqrt{|n_{0,1}||n_{0,2r+1}|}},\qquad
\Delta_0^2  =  \frac{\sum_{k=2}^{2r} |n_{0,k}|^2}{2|n_{0,1}n_{0,2r+1}|}, \nonumber
\end{eqnarray}
and $\delta_0=\arg n_{0,1}/\mu=-\arg n_{0,2r+1}/\mu$, $\phi_k = \arg n_{0,k}$.
The polarization vectors satisfy the following relation
\begin{equation}
\sum_{k=1}^r 2(-1)^{k+1} n_{0,k}n_{0,\bar{k}} + (-1)^r n_{0,r+1}^2 =0.
\end{equation}
Thus for $r=2$ we identify $\Phi_1 = q_2$, $\Phi_0 = q_3/\sqrt{2}$
and $\Phi_3 = q_4$ and we obtain the following solutions for the
equation (\ref{eq:1})
\begin{eqnarray}
\Phi_{\pm 1} & = & -\frac{2i\nu \sqrt{\alpha_2\alpha_4}
e^{-i\mu(x-vt-\delta_{\pm 1})}}{\cosh 2z + \Delta_0^2}
\left(\cos\phi_{\pm 1}\cosh z_{\pm 1}
-i\sin\phi_{\pm 1}\sinh z_{\pm 1}\right), \\
\delta_{\pm 1} &=& \delta_0 \mp\frac{\phi_2-\phi_4}{2\mu},
\qquad \phi_{\pm 1}=\frac{\phi_2+\phi_4}{2}
\qquad z_{\pm 1} = z\mp \frac{1}{2}\ln \frac{\alpha_4}{\alpha_2},\nonumber\\
\Phi_{0} & = &-\frac{\sqrt{2} i\nu \alpha_3e^{-i\mu(x-vt-\delta_0)}}
{\cosh 2z + \Delta_0^2} \left(\cos\phi_3\sinh z - i\sin\phi_3\cosh z
\right).
\end{eqnarray}

For $r=3$ we identify $\Phi_2 = q_2$, $\Phi_1 = q_3$, $\Phi_0 = q_4$,  $\Phi_{-1} = q_5$ and
$\Phi_{-2} = q_6$, so that  the one-soliton solution for equation (\ref{Eq}) reads
\begin{eqnarray}
\Phi_{\pm 2} & = & -\frac{2i\nu \sqrt{\alpha_2\alpha_6} e^{-i\mu(x-vt-\delta_{\pm 2})}} {\cosh 2z + \Delta_0^2} \left( \cos\phi_{\pm 2}
\cosh z_{\pm 2} - i\sin\phi_{\pm 2} \sinh z_{\pm 2} \right), \\
\Phi_{\pm 1} & = & -\frac{2i\nu \sqrt{\alpha_3\alpha_5} e^{-i\mu(x-vt-\delta_{\pm 1})}} {\cosh 2z + \Delta_0^2} \left( \cos\phi_{\pm 1}\sinh z_{\pm 1}
- i \sin\phi_{\pm 1}\cosh z_{\pm 1} \right),\\
\delta_{\pm 2} & = & \delta_0 \mp\frac{\phi_2-\phi_6}{2\mu},
\qquad\phi_{\pm 2} = \frac{\phi_2+\phi_6}{2}
\qquad z_{\pm 2} = z\mp \frac{1}{2}\ln\frac{\alpha_{6}}{\alpha_{2}},\nonumber\\
\delta_{\pm 1} & = & \delta_0 \mp\frac{\phi_3-\phi_5}{2\mu},
\qquad \phi_{\pm 1}=\frac{\phi_3+\phi_5}{2},\qquad
z_{\pm 1} = z\mp \frac{1}{2}\ln\frac{\alpha_5}{\alpha_3}, \nonumber  \\
\Phi_{0} & = & -\frac{2i\nu \alpha_4e^{-i\mu(x-vt-\delta_0)}}
{\cosh 2z + \Delta_0^2} \left(\cos \phi_4 \cosh z -
i \sin \phi_4 \sinh z \right).
\end{eqnarray}
Choosing appropriately the polarization vectors $|n\rangle $ we
are able to reproduce the soliton solutions obtained by Wadati et
al. both for $F=1$ and $F=2$ BEC.

\subsection{The case of rank two solitons }

Here $s=2$ and we have two linearly independent polarization
vectors $|n_a\rangle$, $a=1,2$. From eq. (\ref{eq:31}) we get
\begin{equation}\label{eq:r-2}
\begin{split}
P(x,t)&= \frac{1}{\det M} \left( |n_1(x,t)\rangle M_{22}\langle
n_1^\dag(x,t)| -|n_2(x,t)\rangle M_{12}\langle n_1^\dag(x,t)|
\right. \\
 &-\left . |n_1(x,t)\rangle M_{21}\langle n_2^\dag(x,t)| + |n_2(x,t)\rangle M_{11}\langle
 n_2^\dag(x,t)| \right),\\
\det M(x,t) &= M_{11} M_{22} - M_{12} M_{21}, \qquad
M_{ab}(x,t)= \langle n_a^\dag(x,t) | n_b(x,t)\rangle,
\end{split}
\end{equation}
The corresponding expressions for the rank 2 soliton solution are
obtained by inserting eq. (\ref{eq:r-2}) into (\ref{eq:q-1s}) and
are rather involved. We remark here that the reduction $Q^\dag =Q$
may not be sufficient to ensure that $\det M$ is positive for all
$x$ and $t$, so for certain choices of $|n_a\rangle $ we may have
singular solitons. These and other properties of the rank 2
soliton solutions will be analyzed elsewhere.

\section{Conclusions and discussion}
The main result of the present paper is that a special version of
the model describing $F=2$ spinor Bose-Einstein condensate is
integrable by the ISM. The corresponding Lax representation is
naturally related to the symmetric space ${\bf BD.I.}\simeq {\rm
SO(7)}/{\rm SO(2)\times SO(5)}$, see \cite{Helg}. For a generic
hyperfine spin $F$, the dynamics within the mean field theory is
described by the $2F+1$ component Gross-Pitaevskii equation in one
dimension. If all the spin dependent interactions vanish and only
intensity interaction exists, the multi-component Gross-Pitaevskii
equation in one dimension is equivalent to the vector nonlinear
Schr\"{o}dinger equation with $2F+1$ components \cite{ma74}.

Then equations (\ref{eq:4.2}) with the reduction $p=\epsilon
q^{*}, \epsilon=\pm 1$ are natural  generalization of the vector
nonlinear Schr\"{o}dinger equation, which adequately model the
spinor Bose-Einstein condensates for values of $F=r$ equal to 1
and 2. We expect that for generic $F$ these equations may be
useful in describing BECs with higher hyperfine structure.

Here we derived only generic one-soliton solutions. Following the
ideas of \cite{VG-DJ} one can classify different types of
one-soliton solutions related to different possible choices of the
rank of $P(x,t)$ and its polarization vectors. One can also derive
the $N$-soliton solutions by either repeating $N$ times the
dressing with $u$ (see eq. (\ref{eq:rank1}), or considering more
general dressing factors $u$ with $2N$ zeroes and poles in
$\lambda$. These and other problems will be addressed elsewhere.

\section*{Acknowledgments}\label{sec:Ack}

This work has been supported also by the National Science
Foundation of Bulgaria, contract No. F-1410.

\end{document}